\newcommand\version{June 11, 2018}
\newtheorem{theorem}{Theorem}
\newtheorem{corollary}{Corollary}
\theoremstyle{definition}
\theoremstyle{remark}
\numberwithin{equation}{section}
\renewcommand{\epsilon}{\varepsilon}
\newcommand{\R}{\mathbb{R}}
\DeclareMathOperator{\infspec}{inf\, spec}
\newcommand{\norm}[1]{\left\| #1 \right\|}
\newcommand{\mcrit}{{m^\ast}}
\newcommand{\mcritMS}{m_{1}}
\newcommand{\mcritCur}{m_{2}}
\newcommand{\lowerBound}{0.58}
\newcommand{\upperBound}{1.73}
\newcommand{\hxi}{\hat\xi}
\begin{document}

\title{Stability of the $2+2$ fermionic system with point interactions}

\author{Thomas Moser, Robert Seiringer}
\address{Institute of Science and Technology Austria (IST Austria), Am Campus 1, 3400 Klos\-ter\-neuburg, Austria}
\email{tmoser@ist.ac.at, rseiring@ist.ac.at}


\begin{abstract} We give a  lower bound on the ground state energy of a system of two fermions of one species interacting with two fermions of another species via point interactions. We show that there is a critical mass ratio $\mcritCur \approx \lowerBound$ such that the system is stable, i.e., the energy is bounded from below, for $m \in [\mcritCur, \mcritCur^{-1}]$. So far it was not known whether this $2+2$ system exhibits a stable region at all or whether the formation of four-body bound states  causes an unbounded spectrum for all mass ratios, similar to the Thomas effect. Our result gives further evidence for the stability of the more general $N+M$ system.
\end{abstract}

\date{\version}

\maketitle


\section{Introduction}

Systems of particles interacting via point interactions are frequently used in physics to model short range forces. In these models  the shape of the interaction potential enters only via the  scattering length. Originally point interactions were  introduced in the 1930s to model nuclear interactions \cite{Bethe1935, Bethe1935B, Fermi1936, Thomas1935, wigner1933streuung},  and  later they were also  successfully applied to other areas of physics like polarons (see \cite{massignan2014polarons} and references there) or cold atomic gases \cite{ZwergerBCS}. 

Given $N \ge 1$ fermions of one type with mass $1/2$ and $M \ge 1$  fermions of another type with mass $m/2 > 0$, point interaction models give a meaning to the formal expression
\begin{equation}
-  \sum_{i=1}^N \Delta_{x_i} - \frac 1 {m} \sum_{j=1}^M \Delta_{y_j} + \gamma \sum_{i=1}^N \sum_{j=1}^M \delta(x_i-y_j) \label{eq:pointInter}
\end{equation}
for $\gamma \in \R$.
Because of the existence of discontinuous functions in $H^1(\R^n)$ for $n \geq 2$, this expression is  ill-defined in dimensions larger than one. In the following we restrict our attention to the three-dimensional case but we note that the system also exhibits interesting behavior in two dimensions \cite{DellAntonio1994, Dimock2004, Ulrich2017}. 

A mathematically precise version of \eqref{eq:pointInter} in three dimensions was constructed in \cite{DellAntonio1994, Finco2010} and we will work here with the model introduced there. 
We note that even though these models are mathematically well-defined it is not established whether they can be obtained as a limit of genuine Schr\"odinger operators with interaction potentials of shrinking support. (See, however, \cite{albeverio1988solvable} for the case $N=M=1$, and \cite{basti} for models in one dimension.)

It was already known to Thomas \cite{Thomas1935} that systems with point interactions  are inherently unstable for bosons, in the sense that the energy is not bounded from below, if there are at least three particles involved. It turns out that in the case that the particles are fermions the question of stability is more delicate as it depends on the mass ratio of the two species, in general. 

The case  $N=M=1$ is completely understood as it reduces to a one particle problem  
\cite{albeverio1988solvable}. In this case there exists a one-parameter family of Hamiltonians describing point interactions parameterized by the inverse scattering length, and they are bounded from below for all masses.

Beside this trivial case also the $2+1$ case (i.e., $N = 2$ and $M = 1$), where the two particles of the same species are fermions, is  
well understood \cite{DellAntonio1994, sherm, correggi2012stability, Correggi2015, Correggi2015B,  Becker2017, Minlos2011, Minlos2012, Minlos2014, Minlos2014B}. There is a critical mass ratio $\mcrit \approx 0.0735$ such that the system is unstable for $m < \mcrit$ and stable otherwise. It is remarkable that this critical mass ratio does not depend on the strength of the interaction, i.e., the scattering length. Recently in \cite{Becker2017} the spectrum of the $2+1$ system was discussed in more detail. Moreover, it was shown in \cite{Correggi2015, Minlos2014} that in a certain mass range other models describing point interactions can be constructed.

For larger systems of fermions even the question of stability is generally open. In \cite{MoserSeiringer2017} the stability result for the $2+1$ case was recently extended to the general $N+1$ problem ($N 
\geq 2$ and $M = 1$). In particular it was shown that there exists a critical mass  $\mcritMS \approx 0.36$ such that the system's energy is bounded from below, uniformly in $N$, for $m \ge \mcritMS$. As a consequence of the $2+1$ case this $N+1$ system is  unstable for $m < \mcrit$, but the behavior for $m \in [\mcrit,\mcritMS)$ is unknown. 

By separating particles one can  obtain an upper bound on the ground state energy of the  general $N+M$ problem using the bounds for the $N+1$ or the $1+M$ problem. We note that the latter  is, up to an overall  factor, equivalent to the $M+1$ problem with $m$ replaced by its inverse. Hence the fact that $\mcritMS < 1$ gives hope that there exists a mass region where the general $N+M$ system is stable for all $N$ and $M$. 
The simplest problem of this kind is the $2+2$ case. So far there are only numerical results on its stability available \cite{michelangeli2016,Castin2015}. In particular, the analysis  in \cite{Castin2015} suggests that the critical mass for the $2+2$ case should be equal to $\mcrit$, i.e., the one for the $2+1$ case.

In this paper we give a rigorous proof of stability for the $2+2$ system in a certain window of mass ratios. We find a critical mass $\mcritCur \approx \lowerBound$ such that the system is stable if $m \in [\mcritCur, \mcritCur^{-1}] \approx [\lowerBound,\upperBound]$. 
We note that the critical mass $\mcritCur$ is not optimal and we cannot make any further statements about the mass range $[\mcrit, \mcritCur] \cup [\mcritCur^{-1}, \mcrit^{-1}]$. The behavior for these masses, and in particular the question whether $\mcritCur = \mcrit$,  still  represents an open problem.

\section{The model}
For $p_1,p_2,k_1,k_2 \in \R^3$ and  $m>0$, let 
\begin{equation}
h_0(p_1,p_2,k_1,k_2) = p_1^2 + p_2^2 + \frac 1 m \left(k_1^2+k_2^2\right) \,.
\end{equation}
We will work with the quadratic form $F_\alpha$ introduced in \cite{Finco2010} for $2+2$ particles. Its form domain is given by
\begin{equation}
D(F_\alpha) = \{\psi =  \varphi + G_\mu \xi \mid \varphi \in H^1_{\rm as}(\R^{6})\otimes H^1_{\rm as}(\R^6), \xi \in H^{1/2}(\R^9) \} 
\end{equation}
where, for some (arbitrary) $\mu > 0$,  $G_\mu \xi$ is the function with Fourier transform 
\begin{equation}
\widehat {G_\mu \xi}(p_1,p_2,k_1,k_2) = \sum_{i,j \in \{1,2\}} (-1)^{i+j} (h_0(p_1,p_2,k_1,k_2) + \mu)^{-1} \hat \xi(p_i+k_j, \hat p_i, \hat k_j)  
\end{equation}
and we used the notation that $\hat p_1 = p_2, \hat p_2 = p_1$ and analogously for $k$. The space $H^1_{\rm as}(\R^{6})$ denotes antisymmetric functions in $H^1(\R^{3})\otimes H^1(\R^{3})$. Note that because of the requirement  $\varphi \in H^1(\R^{12})$ the decomposition $\psi = \varphi + G_\mu \xi$ is unique. Note also that the Hilbert space under consideration consists of  functions  that are antisymmetric in the first two and last two variables, i.e., under both the exchange $p_1\leftrightarrow p_2$ and $k_1\leftrightarrow k_2$. 

For $\alpha \in \R$, the quadratic form we consider is given by
\begin{equation}
F_\alpha(\psi) = H(\varphi) - \mu \norm{\psi}^2_2 + 4 T_\mu(\xi) + 4 \alpha \norm{\xi}_2^2\,, \label{def:F} 
\end{equation}
where
\begin{equation}
H(\varphi) = \int_{\R^{12}} \left ( h_0(p_1,p_2,k_1,k_2) + \mu \right) |\hat \varphi(p_1,p_2,k_1,k_2) |^2 \, dp_1\, dp_2\, dk_1\, dk_2
\end{equation}
and $T_\mu(\xi) = \sum_{i=0}^3 \phi_i(\xi)$, with the $\phi_i$ of the form
\begin{align}
\phi_0(\xi) &= 2\pi^2 \left( \frac m{m+1} \right)^{3/2} \int |\hxi (P,p,k)|^2 \sqrt { \frac {P^2}{1+m} + p^2 + \frac {k^2}m  + \mu}\, dP\, dp\, dk \label{def:p0} \\
\phi_1(\xi) &= \int \frac {\hxi^*(p_1+k_1, p_2,k_2) \hxi(p_2+k_1, p_1,k_2)}{h_0(p_1,p_2,k_1,k_2)+ \mu}\, dp_1\, dp_2\, dk_1\, dk_2 \label{def:p1} \\
\phi_2(\xi) &= \int \frac {\hxi^*(p_1+k_1, p_2,k_2) \hxi(p_1+k_2, p_2,k_1)}{h_0(p_1,p_2,k_1,k_2)+ \mu}\, dp_1\, dp_2\, dk_1\, dk_2 \label{def:p2} \\
\phi_3(\xi) &= - \int \frac {\hxi^*(p_1+k_1, p_2,k_2) \hxi(p_2+k_2, p_1,k_1)}{h_0(p_1,p_2,k_1,k_2)+ \mu}\, dp_1\, dp_2\, dk_1\, dk_2\,. \label{def:p3}
\end{align}
We note that $F_\alpha$ is independent of the choice of $\mu>0$. The parameter $\alpha$ corresponds to the inverse scattering length; more precisely, $\alpha = -2\pi^2/a$, with $a\in (-\infty,0) \cup (0,\infty]$ the scattering length.

It was shown in \cite{Finco2010} that $T_\mu(\xi)$ is well-defined on $H^{1/2}(\R^9)$. To show stability, we need to prove that it is in fact positive. If, on the contrary, there exists a $\mu>0$ and a $\xi \in H^{1/2}(\R^9)$ such that $T_\mu(\xi)<0$, a simple scaling argument (choosing $\varphi=0$ and using the scale invariance of $F_0$) can be used to deduce that $F_\alpha$ is unbounded from below for all $\alpha\in \R$.

The functionals $\phi_0$ and $\phi_1$ also appear in a similar form in the discussion of the $2+1$ problem, and $\phi_2$ can be seen as the analogous $1+2$ term. The term $\phi_3$ has no analogue in the $2+1$ or $1+2$ systems. Note that none of the $\phi_i$ for $1\leq i \leq 3$ has a sign, and we expect that cancellations occur between them that are important for stability. In our proof below, we will first bound $\phi_0 + \phi_3$ from below by a positive quantity, which we then use to compensate separately the negative parts of $\phi_1$ and $\phi_2$. Since we shall neglect some positive terms, we cannot expect to obtain a sharp bound. In particular, whether $\mcritCur=\mcrit$, as suggested in \cite{Castin2015}, cannot be determined using this method.

\section{Main result}

For $a\in\R^3$, $b\geq 0$ and $m>0$, let  $O^m_{a,b}$ be the bounded operator on $L^2(\R^3)$ with integral kernel 
\begin{align}\nonumber
 O^m_{a,b}(p_1,p_2) & =  \left[ (p_1+a)^2 + b^2\right]^{-1/4}\left[ (p_2+a)^2 + b^2\right]^{-1/4}
\\ & \quad  \times \frac 1{  p_1^2 + p_2^2 + \frac 2{1+m} p_1\cdot p_2  + \frac {2 ( 2+m)}{(1+m)^2} a^2   + \frac {2m}{(1+m)^2}  b^2} \,.
 \label{defO}
\end{align}
Let further 
\begin{equation}\label{def:L}
\Lambda(m) = - \frac 1{2\pi^2}  \frac {1+m}{\sqrt{m}} \inf_{a\in \R^3,\, b\geq 0} \infspec O^m_{a,b}\,.
\end{equation}

\begin{theorem}
\label{thm:main}
For $m>0$ such that $\Lambda(m) + \Lambda(1/m) \leq 1$, we have 
\begin{equation}
T_\mu(\xi) 
\geq  \left( 1 -\Lambda(m) - \Lambda(1/m) \right) \sqrt{2\mu } \,\pi^2  \left(\frac {m}{m+1}\right)^{3/2} \|\xi\|_2^2 
\end{equation}
for any $\xi \in H^{1/2}(\R^9)$ and any $\mu > 0$.
\end{theorem}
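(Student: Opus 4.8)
The plan is to group the positive diagonal term $\phi_0$ with the sign-indefinite four-body term $\phi_3$ and to extract from $\phi_0+\phi_3$ a single positive multiplication functional
\[
\Psi(\xi)=2\pi^2\left(\frac{m}{m+1}\right)^{3/2}\int |\hat\xi(P,p,k)|^2\, V(P,p,k)\,dP\,dp\,dk ,
\]
for a suitable energy weight $V\ge \sqrt{\mu/2}$, enjoying the three properties $\phi_0+\phi_3\ge \Psi$, $\ \phi_1\ge -\Lambda(m)\,\Psi$, and $\phi_2\ge -\Lambda(1/m)\,\Psi$. Granting these, the theorem is immediate: writing $T_\mu=(\phi_0+\phi_3)+\phi_1+\phi_2$ gives $T_\mu\ge\left(1-\Lambda(m)-\Lambda(1/m)\right)\Psi$, and since $V\ge\sqrt{\mu/2}$ one has $\Psi\ge\sqrt{2\mu}\,\pi^2(m/(m+1))^{3/2}\|\xi\|_2^2$, so under the hypothesis $\Lambda(m)+\Lambda(1/m)\le 1$ the stated bound follows. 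Everything thus reduces to constructing $V$ and verifying the three displayed inequalities.

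For the two single-exchange terms I would reduce to fibered two-body operators. In $\phi_1$ the spectator momentum $k_2$ and the total momentum $R=p_1+p_2+k_1$ are conserved, so both factors of $\hat\xi$ collapse onto the slice $f_{R,k_2}(p):=\hat\xi(R-p,p,k_2)$, and $\phi_1=\int dR\,dk_2\,\langle f_{R,k_2},K\,f_{R,k_2}\rangle$, where $K$ has kernel $(h_0+\mu)^{-1}$ with $k_1=R-p_1-p_2$. Symmetrizing with the part $V_1$ of $V$ depending only on the spectator $p$-momentum (hence centred at $p=0$), i.e. estimating $\langle f,Kf\rangle\ge\big(\infspec\!\left(V_1^{-1/2}K V_1^{-1/2}\right)\big)\int |f|^2 V_1$, and rescaling the momentum so the denominator takes the normal form of \eqref{defO}, I expect each fiber operator to become a multiple of $O^m_{a,b}$: the shift $a$ appears precisely because the spectator-centred weight $V_1$ and the pair-centred denominator have different centres, while the mass-dependent coefficients in \eqref{defO} come from $k_1=R-p_1-p_2$. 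Bounding the fibers uniformly by $\inf_{a,b}\infspec O^m_{a,b}$ and inserting the normalization \eqref{def:L} yields $\phi_1\ge -\Lambda(m)\,\Psi$ as soon as $V\ge V_1$. The term $\phi_2$ is the identical computation with the two species interchanged, i.e. $m\mapsto 1/m$, producing $\Lambda(1/m)$ and a weight $V_2$; one then takes $V\ge\max(V_1,V_2)$.

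The heart of the proof is the first inequality, $\phi_0+\phi_3\ge\Psi$. Unlike $\phi_1,\phi_2$, the double-exchange term $\phi_3$ has no spectator to freeze: fixing only the total momentum $Q=p_1+p_2+k_1+k_2$ still leaves a six-dimensional fiber, so $\phi_3$ does not reduce to a two-body operator, and a crude Cauchy--Schwarz is worthless, since integrating out a pair relative momentum turns the numerator into a factor $E^{-1}$ whereas $\phi_0$ supplies the factor $E=\sqrt{P^2/(1+m)+p^2+k^2/m+\mu}$, and these cannot be matched as the momenta tend to zero — exactly the collapse that must be excluded. I would instead fiber $\phi_0+\phi_3$ over $Q$, treat it as one quadratic form, and prove $\phi_0+\phi_3\ge\Psi$ by exploiting the cancellation directly: the positive $\phi_0$ must dominate the indefinite $\phi_3$ with enough to spare to build $V\ge\max(V_1,V_2)$, and the freedom in lowering $V$ only down to the floor $\sqrt{\mu/2}$ is what leaves the remainder $\sqrt{2\mu}\,\pi^2(m/(m+1))^{3/2}\|\xi\|_2^2$. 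Concretely I would try to write the fiber form of $\phi_0+\phi_3$ as a manifestly nonnegative object (a weighted square, or the form of a positive operator built from the antisymmetry in the two species) minus $\Psi$, and verify nonnegativity by a Schur test with a weight adapted to the exchange.

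The step I expect to be genuinely hard is precisely this lower bound on $\phi_0+\phi_3$: it is the only place where the new four-body term enters, it resists the fiber-to-two-body reduction that tames $\phi_1$ and $\phi_2$, and one must extract a positive form from the indefinite $\phi_3$ without spending the weight budget needed by the two exchange terms. By contrast, the reductions of $\phi_1$ and $\phi_2$ to $O^m_{a,b}$ are, modulo the explicit identification of $(a,b)$ and the rescaling constants, the $2+1$ and $1+2$ analyses run once each, and the final assembly is the elementary bookkeeping above.
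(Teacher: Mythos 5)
Your overall architecture coincides with the paper's: the proof there also groups $\phi_0$ with $\phi_3$, extracts a positive multiplication functional with the weight $L_{1/\sqrt 2}(P_1,P_2,q)=\tfrac{1}{\sqrt2}\sqrt{\tfrac{m}{(1+m)^2}(P_1^2+P_2^2)+\tfrac{m}{1+m}\mu}$, and then bounds $\phi_1$ and $\phi_2$ by $-\Lambda(m)$ resp.\ $-\Lambda(1/m)$ times that same functional via the fibered reduction to $O^m_{a,b}$ that you describe; the final assembly is exactly your first paragraph. However, the one step you defer --- the lower bound $\phi_0+\phi_3\ge\Psi$ --- is precisely the step that carries the content of the theorem, and you have not supplied an argument for it; ``I would try to write it as a manifestly nonnegative object minus $\Psi$'' restates the goal rather than proving it. Moreover, no Schur test is needed there, and your claim that the term ``resists the fiber-to-two-body reduction'' misreads the difficulty: in center-of-mass/relative coordinates for the pairs $(p_1,k_1)$ and $(p_2,k_2)$, the two arguments of $\hxi$ in $\phi_3$ decouple as $\chi(q_2,P_1,P_2)$ and $\chi(q_1,P_2,P_1)$, so after inserting a weight $w$ and completing the square, the indefinite part reduces to minus a purely \emph{diagonal} term in which $q_1$ appears only in the denominator and in $w(q_1,P_2,P_1)$. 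Choosing $w(q,P_1,P_2)=q^2+\lambda^2\bigl(\tfrac{m}{(1+m)^2}(P_1^2+P_2^2)+\tfrac{m}{1+m}\mu\bigr)$ makes the $q_1$-integral explicit, and the resulting weight is dominated pointwise by the one in $\phi_0$ for all $0\le\lambda\le1$. That computation is elementary but genuinely nontrivial, and without it (or a substitute) the inequality $\phi_0+\phi_3\ge\Psi$ is unsupported.

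There is a second, related gap in the consistency of the weights. Your bookkeeping requires a single $V$ with $V\ge\max(V_1,V_2)$ that is simultaneously extractable from $\phi_0+\phi_3$; but $\Lambda(m)$ and $\Lambda(1/m)$ are defined in \eqref{def:L} relative to the specific weight $[(p+a)^2+b^2]^{1/2}$ appearing in \eqref{defO}, so $V_1$ and $V_2$ are not free parameters: in each fiber's coordinates they must take exactly that form, and $V$ must dominate both while still arising from the completed square. The paper resolves this by the special choice $\lambda^2=1/2$: the residual weight $L_{1/\sqrt2}$ is then independent of the relative momentum $q$, and when re-expressed in the fiber coordinates of $\phi_1$ it becomes a constant multiple of $\sqrt{(q+a)^2+b^2}$ with $a=\tfrac12 k-\tfrac{m}{2(2+m)}P$ and $b^2=\tfrac14(P+k)^2+\tfrac{1+m}{2}\mu$, which is exactly what $O^m_{a,b}$ requires (and similarly for $\phi_2$ with $m\mapsto 1/m$). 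Identifying this choice and verifying that one and the same function serves all three bounds is not ``elementary bookkeeping''; it is the mechanism that makes the decomposition close. Your plan is the right one, but as written it establishes the theorem only modulo its central inequality.
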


This bound readily implies stability for $F_\alpha$, as the following corollary shows.

\begin{corollary}\label{cor:main}
For $m$ such that $\Lambda(m) + \Lambda(1/m) < 1$, we have
\begin{equation}
F_\alpha(\psi)\ge \begin{cases} 0 & \alpha \ge 0 \\ - \alpha^2 \left ( \frac {m+1} m \right )^3 \frac 1 {2 \pi^4 (1- \Lambda(m) - \Lambda(1/m))^{2}}   \norm{\psi}^2_2  & \alpha < 0 \end{cases} 
\end{equation}
for any $\psi \in D(F_\alpha)$. 
\end{corollary}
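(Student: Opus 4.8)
The plan is to derive the corollary directly from Theorem~\ref{thm:main} by exploiting the stated $\mu$-independence of $F_\alpha$. Fix $\psi\in D(F_\alpha)$. The singular component $\xi\in H^{1/2}(\R^9)$ is determined by $\psi$ alone and is independent of $\mu$, while $\varphi=\psi-G_\mu\xi$ absorbs all $\mu$-dependence (two choices of $\mu$ differ by an $H^1$-function, which is why $\xi$ is intrinsic). Hence for \emph{every} $\mu>0$, using the definition \eqref{def:F} and adding $\mu\norm{\psi}_2^2$,
\begin{equation}
F_\alpha(\psi) + \mu\norm{\psi}_2^2 = H(\varphi) + 4 T_\mu(\xi) + 4\alpha\norm{\xi}_2^2 \;\geq\; 4 T_\mu(\xi) + 4\alpha\norm{\xi}_2^2,
\end{equation}
where the inequality discards the manifestly nonnegative term $H(\varphi)=\int(h_0+\mu)|\hat\varphi|^2\geq 0$. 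Abbreviating $\Lambda:=\Lambda(m)+\Lambda(1/m)<1$ and $\kappa:=4\sqrt 2\,\pi^2\,(m/(m+1))^{3/2}(1-\Lambda)>0$, Theorem~\ref{thm:main} reads $4T_\mu(\xi)\geq \kappa\sqrt{\mu}\,\norm{\xi}_2^2$, so that
\begin{equation}
F_\alpha(\psi) \;\geq\; -\mu\norm{\psi}_2^2 + \bigl(\kappa\sqrt{\mu} + 4\alpha\bigr)\norm{\xi}_2^2 \qquad\text{for all }\mu>0. \label{plan:key}
\end{equation}

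First I would treat the case $\alpha\geq 0$, where the coefficient of $\norm{\xi}_2^2$ in \eqref{plan:key} is nonnegative for every $\mu$, giving $F_\alpha(\psi)\geq -\mu\norm{\psi}_2^2$; letting $\mu\to 0^+$ yields $F_\alpha(\psi)\geq 0$. For $\alpha<0$ the term $4\alpha\norm{\xi}_2^2$ is negative and cannot simply be dropped, since $\norm{\xi}_2$ is not controlled by $\norm{\psi}_2$; I therefore must keep the coefficient of $\norm{\xi}_2^2$ nonnegative. This forces $\sqrt{\mu}\geq -4\alpha/\kappa$, and the smallest such $\mu$, namely $\mu=16\alpha^2/\kappa^2$ (for which $\kappa\sqrt{\mu}+4\alpha=0$), gives the largest lower bound. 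Substituting into \eqref{plan:key} yields $F_\alpha(\psi)\geq -(16\alpha^2/\kappa^2)\norm{\psi}_2^2$.

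The remaining step is the bookkeeping of constants: inserting $\kappa^2 = 32\pi^4(m/(m+1))^3(1-\Lambda)^2$ into $16\alpha^2/\kappa^2$ reproduces exactly the factor $\alpha^2\bigl(\tfrac{m+1}{m}\bigr)^3\big/\bigl(2\pi^4(1-\Lambda)^2\bigr)$ claimed in the corollary. I do not expect a genuine obstacle here; the two points that require care are (i) the correct use of the $\mu$-independence, so that the fixed function $\xi$ may be paired with a value of $\mu$ chosen \emph{after} $\psi$ is fixed, and (ii) the observation that optimizing over $\mu$—rather than discarding the $\norm{\xi}_2^2$ term—is forced precisely because $\norm{\xi}_2$ is not dominated by $\norm{\psi}_2$. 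All other steps are elementary.
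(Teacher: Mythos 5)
Your proposal is correct and follows essentially the same route as the paper: drop $H(\varphi)\geq 0$, apply Theorem~\ref{thm:main}, treat $\alpha\geq 0$ by letting $\mu\to 0^+$, and for $\alpha<0$ choose $\mu$ so that the coefficient of $\norm{\xi}_2^2$ vanishes, which is exactly the value $\mu=16\alpha^2/\kappa^2$ appearing in the paper. The constant bookkeeping checks out, and your remarks on the $\mu$-independence of $\xi$ and on why one cannot simply discard the $\norm{\xi}_2^2$ term are accurate (the paper leaves these implicit).
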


\begin{proof}
Without loss of generality we can assume  that $\norm{\psi}_2 = 1$. Using Theorem \ref{thm:main} and $H(\varphi)\geq 0$, we get
\begin{align*}
F_\alpha(\psi) +\mu &\ge 4 T_\mu(\xi) +  4 \alpha \norm{\xi}_2^2 \\
&\ge 4 \left [ \alpha + (1 - \Lambda(m) - \Lambda(1/m) ) \sqrt{2 \mu} \,\pi^2 \left ( \frac {m} {m+1} \right)^{3/2} \right ] \norm{\xi}^2_2 \,. \label{eq:cor:main:1}
\end{align*}
In  case  $\alpha \geq 0$ we obtain $F_\alpha(\psi) \ge - \mu$ , 
which shows the result as $\mu > 0$ was arbitrary. If $\alpha<0$,  we choose 
\begin{equation}
\mu = \alpha^2 \left ( \frac {m+1} m \right )^3 \frac 1 {2 \pi^4 (1- \Lambda(m) - \Lambda(1/m))^{2}} \,, 
\end{equation}
which yields the desired result. 
\end{proof}

We thus proved stability as long as $\Lambda(m)+\Lambda(1/m) < 1$. To investigate the implication on $m$, let us first check what happens for $a=0$ and $b=0$. An explicit calculation following \cite{correggi2012stability} shows that  
\begin{align}\nonumber 
\bar\Lambda(m) &:= - \frac 1{2\pi^2}  \frac {1+m}{\sqrt{m}} \infspec O^m_{0,0}
\\ & \,\, =\frac 2 \pi (1+m)^2  \left( \frac 1{\sqrt{m}} - \sqrt{2+m} \arcsin \left( \tfrac 1{1+m} \right) \right)\label{def:bL}
\end{align}
which satisfies $\bar \Lambda(m) + \bar\Lambda(1/m) < 1$ for $0.139 \lesssim m \lesssim 7.189$. This range of masses is the largest possible for which our approach can show stability. 

While we do not know whether $\Lambda(m) = \bar \Lambda(m)$, we shall  give in Section~\ref{sec:lambda} a rough upper bound on $\Lambda(m)$ which shows that 
 $\Lambda(m)+\Lambda(1/m) < 1$ for $\lowerBound \lesssim m \lesssim \upperBound$.

\section{Proof of Theorem \ref{thm:main}}

We shall split the proof into several steps. 

\subsection{Bound on $\phi_3$}

We shall rewrite $\phi_3$ in \eqref{def:p3} using center-of-mass and relative coordinates for each of the  pairs $(p_1,k_1)$ and $(p_2,k_2)$. With $P_1 = p_1+ k_1$, $q_1 = \frac m{1+m} p_1 - \frac 1{1+m} k_1$, $P_2 = p_2 + k_2$ and $q_2 = \frac m{1+m} p_2 - \frac 1{1+m} k_2$, we have 
\begin{align}\nonumber
 \phi_3(\xi) &= - \int dP_1\, dP_2\, dq_1\, dq_2 \\ & \qquad\quad \times \frac {\hxi^*(P_1, \frac {P_2}{1+m}  + q_2, \frac {mP_2}{1+m}-q_2) \hxi(P_2, \frac {P_1}{1+m}+ q_1 , \frac {mP_1}{1+m} -q_1)}{ \frac {1}{1+m} \left( P_1^2 + P_2^2 \right) + \frac {1+m}{m} \left(q_1^2 + q_2^2\right) + \mu}  \,.
\end{align}
By completing the square, we can write, for any positive function $w$, 
\begin{align}\nonumber
\phi_3(\xi) & =  \int \frac{ dP_1\, dP_2\, dq_1\, dq_2}{w(q_2,P_1,P_2) w(q_1,P_2,P_1)}  \\ & \qquad \times  \frac {  \frac 12 \left| \chi_w(q_2,P_1,P_2) -  \chi_w(q_1, P_2, P_1) \right|^2 - \left|   \chi_w(q_2,P_1,P_2)  \right|^2  }{ \frac {1}{1+m} \left( P_1^2 + P_2^2 \right) + \frac {1+m}{m} \left(q_1^2 + q_2^2\right) + \mu } \label{rewr3}
\end{align}
where we denote $\chi_w(q,P_1,P_2) = \hxi(P_1, \frac {P_2}{1+m}  + q, \frac {m P_2} {1+m}-q) w(q,P_1,P_2)$. 
We shall choose
\begin{equation}
w(q,P_1,P_2) = q^2 + \lambda^2 \left( \tfrac m{(1+m)^2} \left( P_1^2 + P_2^2 \right) + \tfrac m{1+m} \mu \right)
\end{equation}
for some  constant $\lambda\geq 0$. The first term in the numerator on the right side of \eqref{rewr3} is manifestly positive. Performing the integration over $q_1$, the integral over the  second term  equals
\begin{align}\nonumber
 \int  & dP_1\, dP_2\, dq_2 \,   \left( -    \frac {2\pi^2 m}{1+m} \right)  \left| \hxi(P_1, \tfrac 1{1+m} P_2 + q_2, \tfrac m{1+m}P_2-q_2) \right|^2 \\ &  \times \frac {   q_2^2 + \lambda^2 \left( \tfrac m{(1+m)^2} \left( P_1^2 + P_2^2 \right) + \tfrac m{1+m} \mu \right)  }{ \lambda \sqrt{  \frac m{(1+m)^2} \left( P_1^2 + P_2^2 \right) + \frac m{1+m}\mu }  + \sqrt  { q_2^2  + \frac {m}{(1+m)^2} \left( P_1^2 + P_2^2 \right) + \frac m{1+m} \mu} }\,.  
\end{align}

Let us compare this latter expression with $\phi_0$ in \eqref{def:p0}, which can be rewritten as 
\begin{align}\nonumber
\phi_0(\xi) & =  \frac {2\pi^2m}{m+1} \int |\hxi (P_1,\tfrac 1{1+m} P_2+q_2, \tfrac m{1+m} P_2 - q_2)|^2 \\ & \qquad \qquad\quad  \times \sqrt { q_2^2 +\frac m{(1+m)^2} \left(P_1^2+P_2^2\right)  +  \frac m {1+m} \mu}\, dP_1\, dP_2\, dq_2\,.
\end{align}
For $0\leq \lambda\leq 1$, one readily checks that 
\begin{align}\nonumber
& L_\lambda(P_1,P_2,q) \\ \nonumber& :=  \sqrt { q^2 +\frac m{(1+m)^2} \left(P_1^2+P_2^2\right)  +  \frac m {1+m} \mu}   \\ & \quad\  - \frac {   q^2 + \lambda^2 \left( \tfrac m{(1+m)^2} \left( P_1^2 + P_2^2 \right) + \tfrac m{1+m} \mu \right)   }{ \lambda \sqrt{  \frac m{(1+m)^2} \left( P_1^2 + P_2^2 \right) + \frac m{1+m} \mu }  + \sqrt  { q^2  + \frac {m}{(1+m)^2} \left( P_1^2 + P_2^2 \right) + \frac m{1+m} \mu} } 
\end{align}
is non-negative. What we have shown here is that
\begin{align}\nonumber 
& \phi_0(\xi) + \phi_3(\xi) \\ & \geq   \frac {2\pi^2m}{m+1} \int |\hxi (P_1,\tfrac 1{1+m} P_2+q, \tfrac m{1+m} P_2 - q)|^2 L_\lambda (P_1,P_2,q) \, dP_1\, dP_2\, dq \label{bd1}
\end{align}
for any $\lambda \geq 0$.

Note that for $\lambda^2=1/2$, $L_\lambda$ takes the simple form
\begin{equation}\label{def:L2}
L_{1/\sqrt{2}}(P_1,P_2,q) =  \frac 1 {\sqrt 2} \sqrt {\tfrac m{(1+m)^2} \left( P_1^2 + P_2^2 \right) + \tfrac m{1+m} \mu }
\end{equation}
and is, in particular, independent of $q$.

\subsection{Bound on $\phi_1$}

For the term $\phi_1$ in \eqref{def:p1}, we shall switch to center-of-mass and relative coordinates for the particles $(p_1,p_2,k_1)$. With $P= p_1+p_2+k_1$,  $q_1 =\frac {1+m}{2+m} p_1 - \frac 1{2+m} (p_2 + k_1)  $ and $q_2 = \frac {1+m}{2+m} p_2 - \frac 1{2+m} (p_1 + k_1)$, as well as $k=k_2$ for short, we have 
\begin{align}\nonumber
\phi_1(\xi)  &=    \frac m{1+m}  \int dP \, dq_1\, dq_2\, dk \\ & \qquad\qquad\quad \times \frac {\hxi^*(\frac {1+m}{2+m} P-q_2 , \frac P{2+m}+q_2 ,k) \hxi(\frac{1+m}{2+m} P-q_1, \frac P{2+m} + q_1 ,k)}{  q_1^2 + q_2^2 + \frac 2{1+m} q_1\cdot q_2  + \frac m{(1+m)(2+m)} P^2 + \frac 1{1+m} k^2 + \frac m{1+m} \mu} \,.
\end{align}
Defining
\begin{equation}\label{def:ell}
 \ell_\lambda(q,P,k)  = L_\lambda (\tfrac {1+m}{2+m} P - q, \tfrac P{2+m} + q +k , \tfrac {mq}{1+m}  + \tfrac {mP}{(1+m)(2+m)}  - \tfrac k{1+m} )
\end{equation}
our aim is to obtain a lower bound on  the operator on $L^2(\R^3)$ with integral kernel 
\begin{align}\nonumber 
 & \ell_\lambda (q_1,P ,k)^{-1/2} \ell_\lambda(q_2,P ,k)^{-1/2}  \\ & \times \frac 1{  q_1^2 + q_2^2 + \frac 2{1+m} q_1\cdot q_2  + \frac m{(1+m)(2+m)} P^2 + \frac 1{1+m} k^2 + \frac m{1+m} \mu} 
\end{align}
for suitable $\lambda$, uniformly in the fixed parameters $P$ and $k$. 

Let us take $\lambda^2=1/2$ for simplicity, in which case we have
\begin{equation}
\ell_{1/\sqrt 2}(q, P ,k) =  \frac{\sqrt{m}}{1+m} \sqrt{
 \left( q  + \tfrac 12 k  - \tfrac m{2(2+m)} P \right)^2+ \tfrac 14 \left(    P + k \right)^2
+ \tfrac {1+m}{2} \mu }\,.
\end{equation}
Note also that 
\begin{align}\nonumber
& \frac m{(1+m)(2+m)} P^2 + \frac 1{1+m} k^2  \\ & =   \frac {2m}{(1+m)^2}  \left[ \frac {2+m} m    \left(  \tfrac 12 k  - \tfrac m{2(2+m)} P  \right)^2+ \frac 1 4  \left(    P + k \right)^2 \right]\,.
\end{align}
With
\begin{equation}
a =  \tfrac 12 k  - \tfrac m{2(2+m)} P \quad, \quad b^2 = \tfrac 14   \left(    P + k \right)^2
  + \tfrac {1+m}2\mu
\end{equation}
our task is thus to find a lower bound on the operator with integral kernel $ \frac {1+m}{\sqrt{m}} O^m_{a,b}(q_1,q_2)$, defined in \eqref{defO}. The best lower bound equals $-2 \pi^2 \Lambda(m)$, by definition.

To summarize, what we have shown here is that 
\begin{equation}
 \phi_1(\xi)  \geq   -\Lambda(m)  \frac {2\pi^2m}{m+1}  \int |\hxi (\tfrac {1+m}{2+m} P -q , \tfrac P{2+m}+q  ,k) |^2 \ell_{1/\sqrt{2}}(q, P ,k) \, dP\, dq \, dk\,.
\end{equation}
Using \eqref{def:ell}, a simple change of variables shows that this is equivalent to 
\begin{align}\nonumber
& \phi_1(\xi)  \\ & \geq  
  -\Lambda(m)  \frac {2\pi^2m}{m+1} \int |\hxi (P_1,\tfrac {P_2}{1+m}  +q, \tfrac {mP_2} {1+m} - q)|^2 L_{1/\sqrt{2}}(P_1,P_2,q)
 \, dP_1\, dP_2\, dq \,.\label{bd2}
\end{align}

\subsection{Bound on $\phi_2$}

In exactly the same way we proceed with $\phi_2$ in \eqref{def:p2}, which 
we rewrite  as
\begin{align}\nonumber
&\phi_2(\xi)  \\ \nonumber &=  \frac m{1+m}  \int  d P\, dq_1\, dq_2\, dp \\ & \qquad\qquad\quad \times  \frac {\hxi^*( \frac {1+m}{1+2m} P- q_2, p,q_2 + \frac {mP}{1+2m} ) \hxi( \frac{1+m}{1+2m} P-q_1, p,q_1 + \frac {mP}{1+2m} )}{  q_1^2 + q_2^2 +\frac {2m}{1+m} q_1\cdot q_2 + \frac m{(1+m)(1+2m)}  P^2  + \frac m{1+m} p^2  + \frac m{1+m} \mu}\,.
\end{align}
If we now define
\begin{equation}
 \tilde\ell_\lambda(q,P,p)  = L_\lambda (\tfrac {1+m}{1+2m} P -q, p + q + \tfrac {mP}{1+2m} , \tfrac {mp}{1+m}  - \tfrac q{1+m} -  \tfrac {mP}{(1+m)(1 + 2m)} )
\end{equation}
we need a lower bound on  the operator on $L^2(\R^3)$ with integral kernel 
\begin{align}\nonumber 
& \tilde\ell_\lambda (q_1,P,p)^{-1/2}  \tilde\ell_\lambda(q_2,P, p)^{-1/2}  \\ & \times \frac 1{  q_1^2 + q_2^2 + \frac {2m}{1+m} q_1\cdot q_2  + \frac m{(1+m)(1+2m)} P^2 + \frac m{1+m} p^2 + \frac m{1+m} \mu}
\end{align}
for fixed $P$ and $p$. 
By proceeding as in the previous subsection, one readily checks that, for $\lambda^2=1/2$, its best lower bound is $-2 \pi^2\Lambda(1/m)$, with $\Lambda$ defined in \eqref{def:L}. In particular, we have 
\begin{align}\nonumber
& \phi_2(\xi)  \\ & \geq  -\Lambda(1/m)  \frac {2\pi^2m}{m+1} \int |\hxi (P_1,\tfrac {P_2}{1+m} +q, \tfrac {mP_2}{1+m}  - q)|^2 L_{1/\sqrt{2}}(P_1,P_2,q)
 \, dP_1\, dP_2\, dq  \,.\label{bd3}
\end{align}

\subsection{Combining  above bounds}

By combining the bounds \eqref{bd1}, \eqref{bd2} and \eqref{bd3} from the previous three subsections, we obtain 
\begin{align}\nonumber
T_\mu(\xi) &=  \sum_{j=0}^3 \phi_j(\xi)  \\ &  \geq \left( 1 -\Lambda(m) - \Lambda(1/m) \right) \frac {2\pi^2m}{m+1}  \nonumber \\ & \quad  \times \int |\hxi (P_1,\tfrac 1{1+m} P_2+q, \tfrac m{1+m} P_2 - q)|^2 L_{1/\sqrt{2}}(P_1,P_2,q)  \, dP_1\, dP_2\, dq 
\end{align}
with $L_{1/\sqrt{2}}$ defined in \eqref{def:L2}. 
In the case  $\Lambda(m) + \Lambda(1/m) \leq 1$, we can further use  $L_{1/\sqrt{2}}(P_1,P_2,q) \geq \sqrt{ m\mu / (2(1+m))}$ for a lower bound. This completes the proof of Theorem~\ref{thm:main}.
 \hfill\qed

\section{Bound on $\Lambda(m)$}\label{sec:lambda}

Note that $\Lambda(m)\geq \bar\Lambda(m)$. To obtain an upper bound, we use the Schur test. We first drop the positive part of the operator with integral kernel 
\begin{equation}
k(p_1, p_2) =  \left[  p_1^2 + p_2^2 + \frac 2{1+m} p_1\cdot p_2  + \frac {2 ( 2+m)}{(1+m)^2} a^2   + \frac {2m}{(1+m)^2}  b^2 \right]^{-1}\,.
\end{equation}
It follows from \cite[Lemma~3]{MoserSeiringer2017} that the negative part of this operator has the integral kernel
\begin{align*}
k_-(p_1,p_2) &= \frac {-k(p_1,p_2) + k(p_1,-p_2)}2 \\
&=  \frac 2 {1+m} \frac {p_1\cdot p_2}{  \left[ p_1^2 + p_2^2   + \frac {2 ( 2+m)}{(1+m)^2} a^2   + \frac {2m}{(1+m)^2} b^2  \right]^2 - \frac {4\left(p_1\cdot p_2\right)^2}{(1+m)^2}  } \,.
\end{align*}
By applying the Cauchy-Schwarz inequality, we obtain, for any positive function $h$ on $\R^3$ (possibly depending on $a$ and $b$) 
\begin{align}\nonumber
 \Lambda(m) & \leq  \frac 1{\pi^2\sqrt{m}}  \sup_{p_1,a,b}  \int_{\R^3} \frac{h(p_1)}{h(p_2)}
 \frac {|p_1\cdot p_2|}{  \left[ p_1^2 + p_2^2   + \frac {2 ( 2+m)}{(1+m)^2} a^2   + \frac {2m}{(1+m)^2} b^2  \right]^2 - \frac {4\left(p_1\cdot p_2\right)^2}{(1+m)^2}  }  
\\ & \qquad \qquad \qquad \qquad \times \left[ (p_2+a)^2 + b^2\right]^{-1/2} dp_2 \,.
\end{align}

By monotonicity, we can set $b=0$, i.e,
\begin{equation}
  \Lambda(m) \!\leq\! \frac 1{\pi^2\sqrt{m}}  \sup_{p_1,a}  \int_{\R^3}\frac{h(p_1)}{h(p_2)}
   \frac {|p_1\cdot p_2|}{  \left[ p_1^2 + p_2^2   + \frac {2 ( 2+m)}{(1+m)^2} a^2   \right]^2 -\frac {4\left(p_1\cdot p_2\right)^2}{(1+m)^2}   }  
\left| p_2\!+a \right|^{-1} dp_2 .
\end{equation}
We shall choose $h$ to be even, i.e., $h(p)=h(-p)$, in which case we can symmetrize to get
\begin{align} \nonumber 
 \Lambda(m) & \leq \frac 1{\pi^2\sqrt{m}}  \sup_{p_1,a} \int_{\R^3} \frac{h(p_1)}{h(p_2)}
   \frac {|p_1\cdot p_2|}{  \left[ p_1^2 + p_2^2   + \frac {2 ( 2+m)}{(1+m)^2} a^2   \right]^2 - \frac {4\left(p_1\cdot p_2\right)^2}{(1+m)^2}  }   
  \\ & \qquad \qquad\qquad \qquad \times \frac 12 \left( \frac 1{ \left| p_2+a \right|} +  \frac 1{\left| p_2-a \right|}\right) dp_2 \nonumber \\ & 
 \leq \frac 1{\pi^2\sqrt{m}}  \sup_{p_1,a}  \int_{\R^3} \frac{h(p_1)}{h(p_2)} 
   \frac {|p_1\cdot p_2|}{  \left[ p_1^2 + p_2^2   + \frac {2 ( 2+m)}{(1+m)^2} a^2   \right]^2 - \frac {4\left(p_1\cdot p_2\right)^2}{(1+m)^2}  }   \nonumber
 \\ & \qquad \qquad\qquad \qquad \times \sqrt { \frac {p_2^2 + a^2}{ \left( p_2^2 + a^2\right)^2 - 4 (p_2\cdot a)^2}} dp_2\,.
\end{align}
To maximize the right side, $a$ wants to be parallel to $p_1$, i.e., $a=\kappa p_1$ for $\kappa\in \R$. This is a direct consequence of \cite[Lemma 5]{MoserSeiringer2017}. We shall choose $h(p)=|p|$. By scale invariance we can set $|p_1|=1$. We then obtain
\begin{align}\nonumber
\Lambda(m)  &\leq \frac 4{\pi\sqrt{m}}  \sup_{\kappa \in \R}  \int_{0}^1 dt \int_0^\infty dr 
   \frac {r^2 t }{  \left[ 1 + r^2   + \frac {2 ( 2+m)}{(1+m)^2} \kappa^2   \right]^2 - \frac {4r^2 t^2} {(1+m)^2}   }  
 \\ & \qquad \qquad \qquad \qquad \qquad  \quad \times  \sqrt { \frac {r^2 + \kappa^2}{ \left( r^2 + \kappa^2\right)^2 - 4 \kappa^2 r^2 t^2 }} \,. \label{lm2}
\end{align}

We further bound $t\leq 1$ in the denominator of the first term in the integrand in \eqref{lm2}, and use that 
\begin{equation}
 \left[ 1 + r^2   + \frac {2 ( 2+m)}{(1+m)^2} \kappa^2   \right]^2\! - \frac {4 r^2}{(1+m)^2}   
  \geq  \frac {m(m+2)}{(1+m)^2} \left[ 1 + r^2   + \frac {2 \sqrt{2+m} }{(1+m)\sqrt{m}} \kappa^2   \right]^2 .
\end{equation}
Since
\begin{equation}
 \int_{0}^1 dt  \, t 
 \sqrt { \frac {r^2 + \kappa^2}{ \left( r^2 + \kappa^2\right)^2 - 4 \kappa^2 r^2 t^2 }}  = \frac 1{2r^2} \sqrt{r^2 +\kappa^2} \min\{ 1, r^2/\kappa^2\}
\end{equation}
we therefore get 
\begin{equation}\label{lblr}
\Lambda(m)  \leq \frac 2{\pi}  \frac {(1+m)^2}{m^{3/2}(m+2)} \sup_{\kappa \in \R}   \int_0^\infty dr 
   \frac { \sqrt{r^2 +\kappa^2} }{  \left[ 1 + r^2   + \frac {2 \sqrt{2+m} }{(1+m)\sqrt{m}} \kappa^2   \right]^2     } \,.  
\end{equation}
We define $c_m = 2 \sqrt{2+m}/((1+m) \sqrt m)$. After explicitly doing the integral, the bound \eqref{lblr} reads
$\Lambda(m)\leq \lambda(m):=\sup_{\kappa>0} \lambda(m,\kappa)$ with 
\begin{align}\nonumber
\lambda(m,\kappa) & := \frac 1{\pi}  \frac {(1+m)^2}{m^{3/2}(m+2)}   \frac 1{1+ c_m \kappa^2} \Bigg( 1 + \frac {\kappa^2}{\sqrt{1+ c_m \kappa^2} \sqrt{1+ \kappa^2 (c_m - 1) }} \\
&\quad \times \ln \left ( \frac{ \sqrt{1+c_m \kappa^2} +\sqrt{1 + \kappa^2(c_m - 1)}}{\kappa} \right ) \Bigg) \,.
\end{align}

For our purpose it is important that $\lambda(1)\approx 0.427 < 1/2$ (see Fig.~\ref{fig:1}). 
\begin{figure}
  \includegraphics[width=0.75\textwidth]{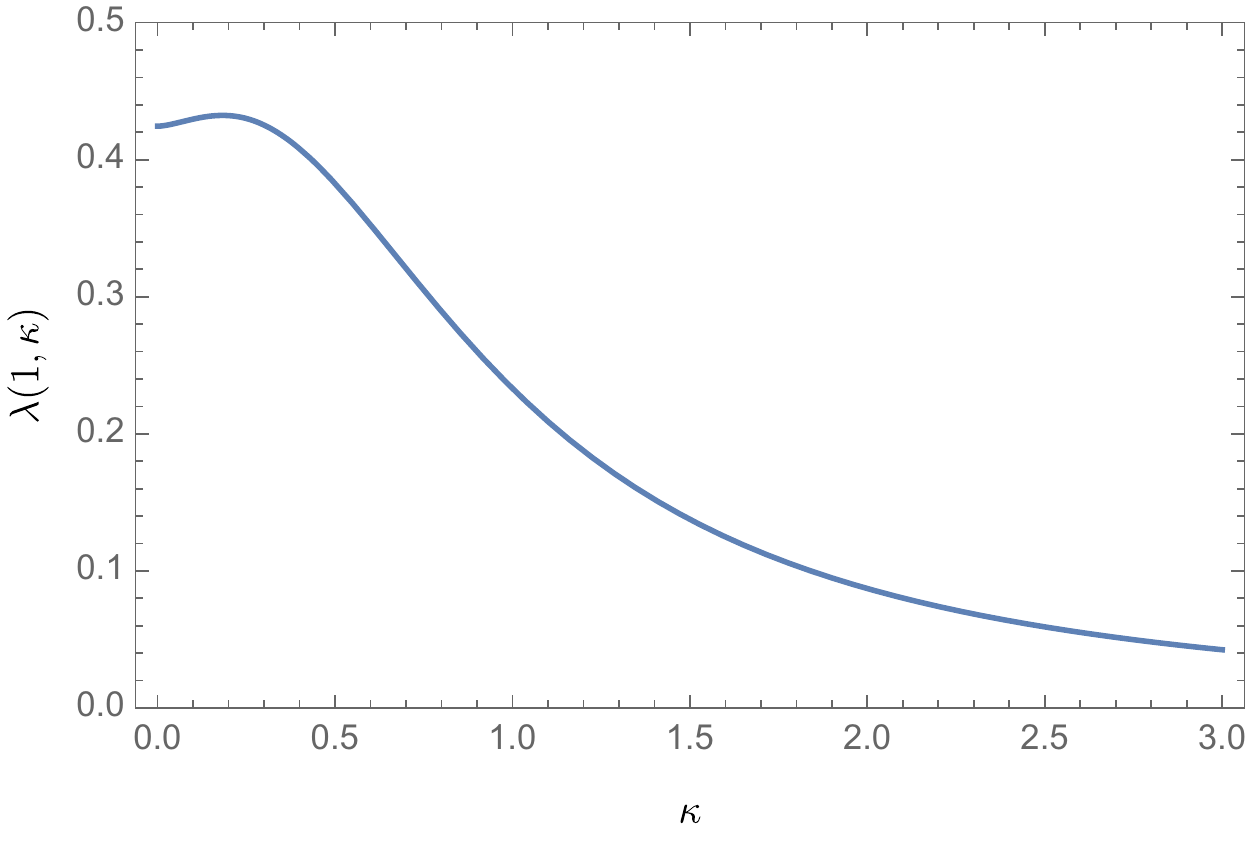}
\caption{The function $\lambda(1,\kappa)$, with $\lambda(1)=\sup_\kappa \lambda(1,\kappa) \approx 0.427 $}
\label{fig:1}       
\end{figure}
By continuity, this implies that $\Lambda(m) + \Lambda(1/m) < 1$ for a window of mass ratios around $1$. In fact, 
a numerical optimization over $\kappa$ leads to the conclusion that $\Lambda(m) + \Lambda(1/m) < 1$ whenever $\lowerBound \approx \mcritCur < m < \mcritCur^{-1} \approx \upperBound$ (see Fig~\ref{fig:2}). 
\begin{figure*}
  \includegraphics[width=0.75\textwidth]{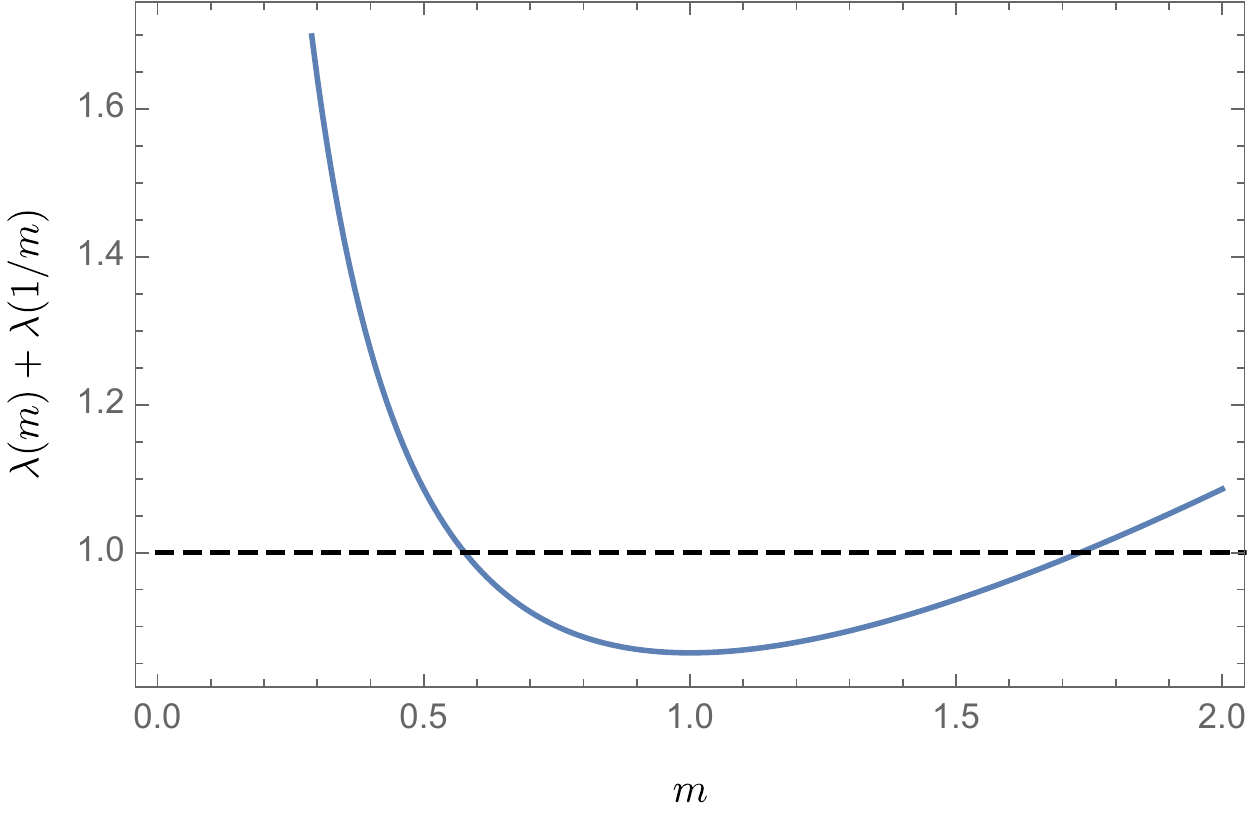}
\caption{Our upper bound on $\Lambda(m)+\Lambda(1/m)$, given by  $ \lambda(m) + \lambda(1/m)$}
\label{fig:2}       
\end{figure*}

\subsection*{Acknowledgments} 
Financial support by the European Research Council (ERC) under the European Union's Horizon 2020 research and innovation programme (grant agreement No 694227), and by the Austrian Science Fund (FWF), project Nr. P 27533-N27, is gratefully acknowledged.


\begin{thebibliography}{23}
%
\bibitem{albeverio1988solvable}
S.~Albeverio, F.~Gesztesy, R.~H\o egh-Krohn, H.~Holden, 
\newblock \emph{Solvable models in quantum mechanics}, $2^{\rm nd}$ ed., 
\newblock Amer. Math. Soc. (2004).

\bibitem{basti}
G. Basti, C. Cacciapuoti, D. Finco, A. Teta, 
{\it The three-body problem in dimension one: From short-range to contact interactions}, preprint, arXiv:1803.08358.
 
\bibitem{Becker2017}
S.~Becker, A.~Michelangeli, A.~Ottolini,
\newblock \emph{Spectral properties of the 2+1 fermionic trimer with contact
  interactions}, preprint, 
\newblock {arXiv:1712.10209}.

\bibitem{Bethe1935}
H.~Bethe, R.~Peierls,
\newblock \emph{Quantum theory of the diplon},
\newblock Proc. R. Soc. Lond. Ser. A \textbf{148},  146--156 (1935).

\bibitem{Bethe1935B}
H.~Bethe, R.~Peierls,
\newblock \emph{The Scattering of Neutrons by Protons},
\newblock Proc. R. Soc. Lond. Ser. A \textbf{149}, 176--183 (1935).

\bibitem{correggi2012stability}
M.~Correggi, G.~Dell'Antonio, D.~Finco, A.~Michelangeli, A.~Teta,
\newblock \emph{Stability for a System of N Fermions plus a different Particle
  with Zero-Range Interactions},
\newblock Rev. Math. Phys. \textbf{24}, 1250017 (2012).

\bibitem{Correggi2015}
M.~Correggi, G.~Dell'Antonio, D.~Finco, A.~Michelangeli, A.~Teta,
\newblock \emph{A Class of Hamiltonians for a Three-Particle Fermionic System
  at Unitarity},
\newblock Math. Phys. Anal. Geom. \textbf{18}, pp. 1--36
  (2015).

\bibitem{Correggi2015B}
M.~Correggi, D.~Finco, A.~Teta,
\newblock \emph{Energy lower bound for the unitary $N+1$ fermionic model},
\newblock Eur. Phys. Lett. \textbf{111},  10003 (2015).

\bibitem{DellAntonio1994}
G.~F. Dell'Antonio, R.~Figari, A.~Teta,
\newblock \emph{Hamiltonians for systems of N particles interacting through
  point interactions},
\newblock Ann. Inst. Henri Poincar\'e \textbf{60},  253--290 (1994).

\bibitem{Dimock2004}
J.~Dimock, S.~G. Rajeev,
\newblock \emph{{Multi-particle Schr{\"{o}}dinger operators with point
  interactions in the plane}},
\newblock J. Phys. A: Math. Gen. \textbf{37}, 
  9157--9173 (2004).

\bibitem{Castin2015}
S.~Endo, Y.~Castin,
\newblock \emph{Absence of a four-body Efimov effect in the $2+2$ fermionic
  problem},
\newblock Phys. Rev. A \textbf{92}, 053624 (2015).

\bibitem{Fermi1936}
E.~Fermi,
\newblock \emph{Sul moto dei neutroni nelle sostanze idrogenate},
\newblock Ric. Sci. Progr. Tecn. Econom. Naz. \textbf{7},  13--52 (1936).

\bibitem{Finco2010}
D.~Finco, A.~Teta,
\newblock \emph{{Remarks on the Hamiltonian for the Fermionic Unitary Gas
  model}},
\newblock Rep. Math. Phys. \textbf{69},  131--159 (2012).

\bibitem{Ulrich2017}
M.~Griesemer, U.~Linden,
\newblock \emph{Stability of the two-dimensional Fermi polaron},
\newblock preprint, arXiv: 1709.02691, Lett. Math. Phys. (in press).

\bibitem{massignan2014polarons}
P.~Massignan, M.~Zaccanti, G.~M. Bruun,
\newblock \emph{Polarons, dressed molecules and itinerant ferromagnetism in
  ultracold Fermi gases},
\newblock Rep. Prog. Phys. \textbf{77},  034401 (2014).

\bibitem{michelangeli2016}
A.~Michelangeli, P.~Pfeiffer,
\newblock \emph{Stability of the $(2+2)$-fermionic system with zero-range
  interaction},
\newblock J. Phys. A: Math. Theor. \textbf{49}, 105301 (2016).

\bibitem{Minlos2011}
R.~Minlos,
\newblock \emph{On point-like interaction between $n$ fermions and another particle},
\newblock Moscow Math. J. \textbf{11}, 113--127 (2011).


\bibitem{Minlos2012}
R.~Minlos,
\newblock \emph{On Pointlike Interaction between Three Particles: Two Fermions
  and Another Particle},
\newblock ISRN Math. Phys. \textbf{2012},  230245 (2012).

\bibitem{Minlos2014}
R.~Minlos,
\newblock \emph{On Pointlike Interaction between Three Particles: Two Fermions
  and Another Particle II},
\newblock Moscow Math. J. \textbf{14},  617--637 (2014).

\bibitem{Minlos2014B}
R.~Minlos,
\newblock \emph{A system of three quantum particles with point-like
  interactions},
\newblock Russian Math. Surveys \textbf{69},  539--564 (2014).

\bibitem{MoserSeiringer2017}
T.~Moser, R.~Seiringer,
\newblock \emph{Stability of a fermionic $N+ 1$ particle system with point
  interactions},
\newblock Commun. Math. Phys. \textbf{356},  329--355 (2017).

\bibitem{sherm}
M. K. Shermatov, {\it Point Interaction Between Two Fermions and One Particle of a Different Nature}, Theor. Math. Phys. {\bf 136}, 1119--1130 (2003).

\bibitem{Thomas1935}
L.~H. Thomas,
\newblock \emph{{The interaction between a neutron and a proton and the
  structure of $H^3$}},
\newblock Phys. Rev. \textbf{47},  903--909 (1935).

\bibitem{wigner1933streuung}
E.~Wigner,
\newblock \emph{Über die Streuung von Neutronen an Protonen},
\newblock Z. Phys. \textbf{83},  253--258 (1933).

\bibitem{ZwergerBCS}
W.~Zwerger, ed., 
\newblock \emph{The BCS--BEC Crossover and the Unitary Fermi Gas}, Springer Lecture Notes in Physics {\bf 836} (2012).

\end{thebibliography}
\end{document}